\documentclass{article}
\usepackage{amsmath,amssymb,amsthm,amsfonts}
\usepackage{mathtools}
\usepackage{bm}
\usepackage{bbm}
\usepackage{mathrsfs}
\usepackage{physics} 
\usepackage{graphicx}
\usepackage{xcolor}
\usepackage{tikz-cd}

\definecolor{oiBlue}{HTML}{0072B2}
\definecolor{oiVermilion}{HTML}{D55E00}
\definecolor{oiBlack}{HTML}{000000}

\newcommand{\id}{\mathbbm{1}}       
\newcommand{\hs}{\mathcal{H}}       
\newcommand{\ks}{\mathcal{K}}       
\newcommand{\Tc}{\mathcal{T}}       
\newcommand{\ii}{\mathsf{i}}        

\theoremstyle{plain}
\newtheorem{theorem}{Theorem}
\newtheorem{lemma}{Lemma}

\theoremstyle{definition}
\newtheorem{definition}{Definition}
\theoremstyle{remark}
\newtheorem{remark}{Remark}

\title{Quantum mechanics over real numbers fully reproduces standard quantum theory}
\author{Alan C. Maioli $^{1}$, Evaldo M. F. Curado $^{1,2}$, Jean-Pierre Gazeau$^{3,4}$}

\begin{document}

\maketitle
\noindent
$^{1}$ \quad Centro Brasileiro de Pesquisas F\'{\i}sicas\\
$^{2}$ \quad National Institute of Science and Technology for Complex Systems\\
\ \ \ \quad Rua Xavier Sigaud 150,  Rio de Janeiro, Brazil\\
$^{3}$\quad Université Paris Cité, CNRS, Astroparticule et Cosmologie, 75013 Paris, France \\ 
$^{4}$\quad Faculty of Mathematics, University  of Bia\l ystok, 15-245 Bia\l ystok, Poland 
\begin{abstract}
   Standard quantum mechanics employs complex Hilbert spaces, but whether
complex numbers are fundamental or merely convenient has long been
debated. For decades, real-valued equivalents were considered mathematically possible but cumbersome. However, a highly cited 2021 result claimed that any quantum theory based on real numbers is experimentally falsifiable
via network Bell experiments. Yet, it remains an open question whether this falsification applies to all real-valued theories. Here we show that this conclusion rests on an incomplete real formulation, and we present a rigorous real-valued framework that perfectly reproduces all predictions of standard quantum mechanics. We demonstrate that the standard real tensor product ($\otimes_{\mathbb{R}}$) used in previous no-go theorems is algebraically incompatible with the rich structure of conventional quantum mechanics. We present a real framework based on K\"{a}hler space and prove that it is
exactly isomorphic to established quantum mechanics via an
explicit bijection $\gamma$. 
The isomorphism extends to composite
systems through a symplectic composition rule $\otimes^{\ks}$ that
replaces the Kronecker product. Consequently, our formulation achieves the maximal $\mathrm{CHSH}_{3}$ violation of $6\sqrt{2}$ using purely real variables, demonstrating that the no-go theorem is specific to a particular real representation of states 
and operators and to the composition rule $\otimes_\mathbb{R}$ built upon it, neither of which extends to the present K\"{a}hler framework. These results demonstrate that complex numbers are not fundamentally required by nature; rather, they encode a deeper real geometric structure that governs quantum interference and entanglement, settling this long debate.
\end{abstract}

\section{Introduction}
The appearance of the imaginary unit $\ii=\sqrt{-1}$ in the Schr\"{o}dinger
equation has been a source of conceptual unease since the inception of
quantum mechanics.  Dyson captured this vividly: \textit{``One of the
most profound jokes of nature is the square root of minus one that
Schr\"{o}dinger put into his equation\ldots\ Suddenly it became a wave
equation instead of a heat conduction equation''}~\cite{Dyson2009}.
Concretely, every measurable quantity in classical mechanics is real,
and complex numbers appeared there only as a computational aid.  Their
apparent necessity in quantum theory - to encode interference,
entanglement, and unitary evolution - has therefore demanded explanation.

Early attempts at a real formulation, pioneered by Stueckelberg in the
1960s~\cite{Stueckelberg1960}, established a standard recipe: 
replace a complex $d$-dimensional Hilbert space by a real $2d$-dimensional one via
the substitution
\begin{equation}\label{eq:doubling}
  1 \leftrightarrow
  \id_2 = \begin{pmatrix}1&0\\0&1\end{pmatrix},
  \qquad
  \ii \leftrightarrow
  \tau = \begin{pmatrix}0&-1\\1&0\end{pmatrix},
\end{equation}
where $\tau^2 = -\id_2$ mirrors the defining relation $\ii^2 = -1$,
and $\tau$ is concretely the matrix of counterclockwise rotation by $\pi/2$
in the plane.  
This doubling map reproduces single-system statistics faithfully.
For composite systems, however, one must form the tensor product of two
such doubled spaces.  Using the standard Kronecker product
$\otimes_{\mathbb{R}}$ on the doubled spaces on real numbers does not recover the
complex tensor product $\otimes_{\mathbb{C}}$: the dimension count alone
shows $$\dim(\mathbb{R}^{2m}\otimes_{\mathbb{R}}\mathbb{R}^{2n})=4mn,$$
whereas $$\dim(\mathbb{C}^m \otimes_{\mathbb{C}} \mathbb{C}^n)=mn$$ over
$\mathbb{C}$ (equivalently $2mn$ over $\mathbb{R}$).  The two theories
diverge precisely in multi-partite scenarios.

It is perhaps remarkable that, nearly a century after the mathematical 
foundations of quantum mechanics were laid, the correct notion of tensor 
product for composite systems continues to generate confusion. 
The tensor product $\otimes$
is arguably the most conceptually subtle operation in the quantum formalism: familiar in its matrix guise as the Kronecker product, yet deceptively simple notation concealing the passage from independent to genuinely entangled degrees of freedom: unlike direct sums or operator products, 
it encodes the correlation structure of compound systems in a way that 
has no classical counterpart. Its definition is not merely a notational 
choice but carries deep algebraic content - specifically, it must be 
compatible with every piece of structure that the constituent spaces carry. 
In a complex Hilbert space, that structure includes the field $\mathbb{C}$ 
itself: the tensor product $\otimes_{\mathbb{C}}$ is a tensor product 
over $\mathbb{C}$, enforcing $\mathbb{C}$-linearity 
across subsystems. When one passes to a real description via a 
doubling map, this $\mathbb{C}$-linearity does not disappear; 
it re-emerges as compatibility with the complex structure $J$. 
Replacing $\otimes_{\mathbb{C}}$ directly by the Kronecker product 
$\otimes_{\mathbb{R}}$ on doubled spaces discards precisely this 
constraint, producing a strictly larger - and physically spurious - space. 
The algebraic incompatibility of $\otimes_{\mathbb{R}}$ with the complex structure $J$ of the doubled space becomes evident once made explicit, yet its consequences for no-go arguments were not previously noted~\cite{Renou2021}.

Renou et al.~\cite{Renou2021} exploited this divergence to claim an
experimental falsification of real quantum theory, demonstrating that
the $\mathrm{CHSH}_3$ inequality - tailored for a tripartite
entanglement-swapping network - admits a maximal quantum violation of
$6\sqrt{2}$ that, they argued, is unreachable by any real quantum theory (in their sense).
A subsequent experiment confirmed the prediction~\cite{Chen2022}.

It is important to note that the no-go theorem derived by Renou et al.~\cite{Renou2021} is internally consistent within their axiomatic framework of a very specific ``real quantum physics'' (RQP). However, here we show that the incompatibility identified by Renou et al.\ is not a property of real numbers as such, but of the (inadequate) specific composition rule $\otimes_{\mathbb{R}}$ that, we argue, is incompatible with a faithful real representation of quantum mechanics. The core observation,
already implicit in the Stueckelberg programme, is that the matrices
$\id_2$ and $\tau$ in Eq.~\eqref{eq:doubling} are not just a
notational convenience - they constitute a complex structure
$J=\tau\otimes\id_N$ on the real doubled space.  This structure
promotes the doubled space to a K\"{a}hler space
$(\mathbb{R}^{2N},g,\omega,J)$ \cite{Volovich2025a,Volovich2025b}, and the compatible composition rule for
K\"{a}hler spaces - the symplectic tensor product $\otimes^{\ks}$ - is
distinct from $\otimes_{\mathbb{R}}$ (which justifies the subscript notation).

Our main result is the following isomorphism theorem (proved in
Appendix):

\begin{theorem}[Isomorphism]\label{thm:main}
  Let $\hs$ be a complex Hilbert space of dimension $N$.  Define the
  K\"{a}hler space $\ks = (\mathbb{R}^{2N},g,\omega,J)$ as in Eq.~\eqref{eq:kahler} (below).
  The map $\gamma^{-1}:\hs\to\ks$ defined by
  Eq.~\eqref{eq:gammainv} is a bijection, with inverse $\gamma$ given
  by Eq.~\eqref{eq:gamma}.  Moreover,
  $$
    \gamma^{-1}(A\otimes_{\mathbb{C}} B)
    = \gamma^{-1}(A)\otimes^{\ks}\gamma^{-1}(B),
  $$
  so $(\hs,\otimes_{\mathbb{C}})$ and $(\ks,\otimes^{\ks})$ are
  isomorphic as \textup{(}monoidal\textup{)} quantum theories.
\end{theorem}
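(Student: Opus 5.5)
The plan is to make the realification explicit and then show that the symplectic composition rule is simply the complex tensor product transported along $\gamma^{-1}$. For a vector $\psi=x+\ii y\in\hs$ with $x,y\in\mathbb{R}^N$, I take $\gamma^{-1}(\psi)=(x,y)^{\top}\in\mathbb{R}^{2N}$. An operator $A=A_R+\ii A_I$ is sent to
\[
\gamma^{-1}(A)=\id_2\otimes A_R+\tau\otimes A_I ,
\]
which is consistent with Eq.~\eqref{eq:doubling}. The inverse $\gamma$ sends $(x,y)\mapsto x+\ii y$ on vectors. On the commutant of $J=\tau\otimes\id_N$ it reads off $A_R$ and $A_I$ from the blocks.

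\textbf{Bijectivity.} On vectors this is the canonical $\mathbb{R}$-linear identification $\mathbb{C}^N\cong\mathbb{R}^{2N}$. The map is $\mathbb{C}$-linear once $\ii$ acts as $J$. On operators, a real $2N\times 2N$ matrix $M$ commutes with $J$ exactly when it has the block form $\begin{pmatrix}A_R&-A_I\\A_I&A_R\end{pmatrix}$. Hence $\gamma^{-1}$ is a bijection onto $\{M: [M,J]=0\}$, with inverse $\gamma$. I will also check the following identities:
\begin{itemize}
\item $\gamma^{-1}(AB)=\gamma^{-1}(A)\gamma^{-1}(B)$ and $\gamma^{-1}(A^\dagger)=\gamma^{-1}(A)^{\top}$, using $\tau^{\top}=-\tau$ and $\tau^2=-\id_2$.
\item $\langle\phi,\psi\rangle=g(\gamma^{-1}\phi,\gamma^{-1}\psi)+\ii\,\omega(\gamma^{-1}\phi,\gamma^{-1}\psi)$, with $\omega(u,v)=g(u,Jv)$.
\end{itemize}
Together these show that states, observables, unitaries and Born probabilities all correspond, so the single-system theories are isomorphic.

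\textbf{Composite systems.} The real Kronecker product cannot work: $\gamma^{-1}(A)\otimes_{\mathbb{R}}\gamma^{-1}(B)$ lives on $\mathbb{R}^{4mn}$ and commutes with two complex structures, $J_1\otimes\id$ and $\id\otimes J_2$. I therefore define $\ks_1\otimes^{\ks}\ks_2$ as the quotient of $\ks_1\otimes_{\mathbb{R}}\ks_2$ by the relation $J_1u\otimes v=u\otimes J_2 v$. Equivalently, it is the $+\ii$ eigenspace of the commuting involution $-J_1\otimes J_2$, i.e. the image of the projector $P=\tfrac12(\id+(-J_1\otimes J_2))$, which has real dimension $2mn$. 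On this space $J_1\otimes\id$ and $\id\otimes J_2$ coincide and give the complex structure $J$. In coordinates I identify it with $\mathbb{R}^{2mn}$ as follows:
\begin{itemize}
\item vectors: $(x_1,y_1)\otimes^{\ks}(x_2,y_2)=(x_1\otimes x_2-y_1\otimes y_2,\; x_1\otimes y_2+y_1\otimes x_2)$;
\item operators: $\gamma^{-1}(A)\otimes^{\ks}\gamma^{-1}(B)=\id_2\otimes(A_R\otimes B_R-A_I\otimes B_I)+\tau\otimes(A_R\otimes B_I+A_I\otimes B_R)$.
\end{itemize}
Expanding $(A_R+\ii A_I)\otimes_{\mathbb{C}}(B_R+\ii B_I)$ in real and imaginary parts then gives $\gamma^{-1}(A\otimes_{\mathbb{C}}B)=\gamma^{-1}(A)\otimes^{\ks}\gamma^{-1}(B)$ directly. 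The same expansion gives the vector version.

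\textbf{Main obstacle.} The hard part is not this identity, which is a two-line expansion. It is showing that $\otimes^{\ks}$ is intrinsically well-defined and gives a monoidal structure, rather than being an ad hoc formula. I would try to establish three things:
\begin{enumerate}
\item It is the unique $\mathbb{R}$-bilinear map, up to isomorphism, that is $J$-balanced and universal for $J$-balanced maps. This is the real restatement of the universal property of $\otimes_{\mathbb{C}}$.
\item The associator and unit ($\mathbb{R}^2$ with $J=\tau$) satisfy the coherence conditions. These are inherited from $\otimes_{\mathbb{C}}$ through $\gamma$.
\item The map $g+\ii\omega$ is multiplicative on product vectors. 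Hence the composite carries the Kähler structure $(g,\omega,J)$, and $\gamma$ is an isometric monoidal functor.
\end{enumerate}
Finally, $\gamma$ preserves inner products, products and adjoints. Every expectation value, in particular the $\mathrm{CHSH}_3$ value $6\sqrt2$, therefore transfers verbatim.
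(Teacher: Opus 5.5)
Your proposal is correct. Its core step is the same as the paper's Lemma~\ref{lem:S2}: you expand $(A_R+\ii A_I)\otimes_{\mathbb{C}}(B_R+\ii B_I)$ into real and imaginary parts and match the block formula. Your multiplicativity check is the paper's Lemma~\ref{lem:mult}. The differences are in framing, and they mostly work in your favour.

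\textbf{Bijectivity.} The paper composes the tensor-contraction formula for $\gamma$ with $\gamma^{-1}$, but its Statement~1 tacitly assumes $\mathcal{L}=\id_2\otimes X+\tau\otimes Y$. On an arbitrary real matrix $M$ that formula gives $\gamma^{-1}\gamma(M)=\tfrac12(M-JMJ)$. So describing the image as the commutant of $J$, as you do, is exactly what makes ``bijection'' true, and you say so explicitly.

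\textbf{Composition rule.} The paper simply \emph{defines} $\otimes^{\ks}$ by the block formula, so Lemma~\ref{lem:S2} is essentially transport of structure. It only asserts the link to the $J$-balanced tensor product in Section~\ref{sec:S7}. Your quotient construction derives the block formula from an intrinsic definition and explains the dimension $2mn$.

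\textbf{States.} The paper uses the $2N\times 2$ matrix $\id_2\otimes\ket{R}+\tau\otimes\ket{I}=[v\,|\,Jv]$, where $v=(R,I)^{\top}$ is your vector. This lets one formula and Lemma~\ref{lem:mult} cover both states and operators. It also makes expectation values appear as $\gamma^{-1}$ of a complex number, e.g.\ $6\sqrt2\,\id_2$. Your column vectors instead need the separate identity $g+\ii\omega$ and the adjoint rule, both of which you supply.

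Your items (1)--(3) go beyond anything the paper proves. The paper rests the monoidal claim on bijection, multiplicativity and the two tensor lemmas alone.

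There are two small slips.

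\begin{itemize}
\item $-J_1\otimes J_2$ is an involution, so the relevant subspace is its $+1$ eigenspace, not a ``$+\ii$'' eigenspace. That eigenspace is $\ker(J_1\otimes\id-\id\otimes J_2)$, and your projector $\tfrac12(\id-J_1\otimes J_2)$ is the right one. To match it with the quotient, use that $J_1\otimes\id-\id\otimes J_2$ is skew-symmetric, so the relation subspace (its image) is the orthogonal complement of that kernel.
\item Your $\omega(u,v)=g(u,Jv)$ has the opposite sign to Eq.~\eqref{eq:kahler}. That equation forces $\omega(x,y)=-g(x,Jy)=\operatorname{Im}\braket{x}{y}$. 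With your sign, $g+\ii\omega$ is the inner product antilinear in the second argument rather than the first.
\end{itemize}
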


This result - which we substantiate with explicit matrix calculations,
including the full $\mathrm{CHSH}_{3}$ computation - directly contradicts
the Renou et al.\ no-go theorem.  It is consistent with, and provides
the explicit constructive counterpart to the recent independent results
of Hoffreumon and Woods~\cite{Hoffreumon2025,Hoffreumon2026}, who showed
from an operational/postulational perspective that a real quantum theory
with representation locality is possible, and that the key assumption of
Renou et al.\ (product-state independence of sources) is experimentally
untestable~\cite{Hoffreumon2026}.  Related works appeared
simultaneously~\cite{Hita2025,Volovich2025a,Volovich2025b}.

It is only fair to acknowledge that complex structure is not an 
accidental feature of quantum mechanics: it is deeply woven into 
the formalism from the outset. The Schr\"{o}dinger equation is 
intrinsically complex; unitary evolution, the superposition principle, 
and interference phenomena all rely on the full algebraic richness of 
$\mathbb{C}$. What our analysis clarifies, however, is that this 
complexity is not fundamental in the ontological sense - it is 
not a primitive ingredient that must be postulated independently of 
the real structure of the theory. Rather, $\mathbb{C}$ encodes a 
real geometric datum: the complex structure $J$ of a K\"{a}hler manifold, 
satisfying $J^2 = -\id$ and compatible with both the metric $g$ and 
the symplectic form $\omega$. In this light, the debate between 
``real'' and ``complex'' quantum mechanics is somewhat misleading: 
the two descriptions are not rival theories but dual languages for 
the same geometry. The question is not whether complex numbers appear, 
but whether their appearance is irreducible or whether it 
reflects an underlying real structure that can be made explicit. 
Our isomorphism theorem answers this question unambiguously in 
favour of the latter.

\section{K\"{a}hler space quantum mechanics}

\subsection{Definition of a K\"{a}hler space}

A K\"{a}hler space $\ks$ \cite{Volovich2025a,Volovich2025b} is a quadruplet $(\mathbb{V},g,\omega,J)$ where
$\mathbb{V}$ is a real vector space, $g$ is a positive-definite inner
product (metric), $\omega$ is a non-degenerate skew-symmetric bilinear
form (symplectic form), and $J:\mathbb{V}\to\mathbb{V}$ is a
complex structure satisfying $J^2=-\id$.  These are linked by the
fundamental compatibility relations
\begin{equation}\label{eq:kahler}
  g(x,y) = \omega(x,Jy),
  \qquad
  \omega(Jx,Jy) = \omega(x,y),
  \qquad x,y\in\mathbb{V}.
\end{equation}
Together, $(g,\omega,J)$ identifies  $\mathbb{V}$ with a complex Hilbert space $\hs$
over $\mathbb{C}$ through the bijection $ (a\id+bJ)v_\ks \leftrightarrow (a+\ii b) v_\hs $.
The K\"{a}hler space $\ks$ and the complex Hilbert space $\hs$ thus encode
identical physics in different but mutually translatable
languages.

\subsection{Realification and complexification maps}

For an $N$-dimensional complex Hilbert space $\hs$, the corresponding
K\"{a}hler space is $\ks = \mathbb{R}^{2N}$ with the complex structure
\begin{equation}\label{eq:J}
  J = \tau \otimes \id_N,
  \qquad
  \tau = \begin{pmatrix}0&-1\\1&0\end{pmatrix},
\end{equation}
where $\otimes\equiv \otimes_\mathbb{R}$.

Any linear operator $L=X+\ii Y$ on $\hs$ (with $X,Y$ real $N\times N$
matrices) has a K\"{a}hler space counterpart
\begin{equation}\label{eq:gammainv}
  \mathcal{L} = \gamma^{-1}(L) =
  \id_2\otimes\operatorname{Re}(L) + \tau\otimes\operatorname{Im}(L)
  = \begin{pmatrix}X&-Y\\Y&X\end{pmatrix}.
\end{equation}
The inverse map $\gamma:\ks\to\hs$ extracts the complex operator from a
K\"{a}hler block matrix:
\begin{equation}\label{eq:gamma}
  L = \gamma(\mathcal{L}) =
  \frac{1}{2}\!\left(
    \Tc[\mathcal{L}] +
    \Tc[(-\sigma_y\otimes\id_N)\,\mathcal{L}]
  \right),
\end{equation}
where $\Tc[A\otimes C]=(\Tr A)C$ is the tensor contraction (see
Appendix).  The same prescription applies to state
vectors: for $\ket{\psi}_{\hs}=\ket{R}+\ii\ket{I}$ with real component
vectors $\ket{R},\ket{I}\in \ \mbox{subspace}\ \sim \mathbb{R}^N$,
\begin{equation}\label{eq:statevec}
  \ket{\psi}_{\ks} = \gamma^{-1}(\ket{\psi}_{\hs}) =
  \begin{pmatrix}\ket{R}&-\ket{I}\\\ket{I}&\ket{R}\end{pmatrix},
\end{equation}
a $2N\times 2$ real matrix.  Bra-states map as
$\bra{\psi}_{\ks} = \id_2\otimes\bra{R} - \tau\otimes\bra{I}$.

\begin{remark}
The matrices $\id_2$ and $\tau$ in Eq.~\eqref{eq:doubling} are not a
second quantum system; they are the matrix representation of the complex
structure $J$.  Treating them as an additional physical qubit - as done
in some analyses~\cite{Feng2025} - conflates the mathematical encoding
with the physics.
\end{remark}

\subsection{Inner product and symplectic form}

In K\"{a}hler space the physical inner product is the Riemannian metric $g$,
not the formal matrix product $\braket{\psi_2}{\psi_1}_{\ks}$ (which
is not a real scalar).  Explicitly,
\begin{equation}\label{eq:metric}
  g\!\left(\ket{\psi_2}_{\ks},\ket{\psi_1}_{\ks}\right)
  = \tfrac{1}{2}\,\Tr\!\left[\braket{\psi_2}{\psi_1}_{\ks}\right]
  = \operatorname{Re}\!\braket{\psi_2}{\psi_1}_{\hs},
\end{equation}
while the symplectic form reads
\begin{equation}
  \omega\!\left(\ket{\psi_2}_{\ks},\ket{\psi_1}_{\ks}\right)
  = \tfrac{1}{2}\,\Tr\!\left[-J\braket{\psi_2}{\psi_1}_{\ks}\right]
  = \operatorname{Im}\!\braket{\psi_2}{\psi_1}_{\hs}.
\end{equation}
All expectation values and probabilities computed via $g$ in $\ks$ agree
precisely with those computed via $\langle\cdot|\cdot\rangle_{\hs}$ in
$\hs$ (see Section S3).

\section{Symplectic composition rule and isomorphism}

The failure of literal quantum mechanics on real numbers in multipartite settings
originates entirely in the composition rule for subsystems.  To see why,
note that the algebraic rule for complex multiplication is
$(X_A+\ii Y_A)\otimes(X_B+\ii Y_B)
 = (X_A\otimes X_B - Y_A\otimes Y_B)
 + \ii(X_A\otimes Y_B + Y_A\otimes X_B)$.
The standard Kronecker product on doubled real spaces does not
respect this rule; instead it mixes real and imaginary sectors
incoherently.

\subsection{The symplectic tensor product}

\begin{definition}[Symplectic tensor product]\label{def:symp}
  Let $\mathcal{L}_A = \gamma^{-1}(A)$ and $\mathcal{L}_B = \gamma^{-1}(B)$
  be K\"{a}hler representatives of operators
  $A = X_A+\ii Y_A$, $B = X_B+\ii Y_B$.
  Their symplectic composite is
  \begin{equation}\label{eq:symptensor}
    \mathcal{L}_{AB} = \mathcal{L}_A \otimes^{\ks} \mathcal{L}_B
    =
    \begin{pmatrix}
      X_A\otimes X_B - Y_A\otimes Y_B &
     -X_A\otimes Y_B - Y_A\otimes X_B \\
      X_A\otimes Y_B + Y_A\otimes X_B &
      X_A\otimes X_B - Y_A\otimes Y_B
    \end{pmatrix},
  \end{equation}
  equivalently
  $\mathcal{L}_{AB}
   = \id_2\otimes(X_A\otimes X_B - Y_A\otimes Y_B)
   + \tau\otimes(X_A\otimes Y_B + Y_A\otimes X_B)$.
\end{definition}

This rule encodes the complex multiplication law $(A\otimes C-B\otimes D)
+\ii(A\otimes D + B\otimes C)$ directly in the block-matrix structure.
The commutative diagram in Fig.~\ref{fig:diagram} captures the
equivalence: going round either path - first composing in $\ks$ or first
lifting to $\hs$ and then projecting back - yields the same result.

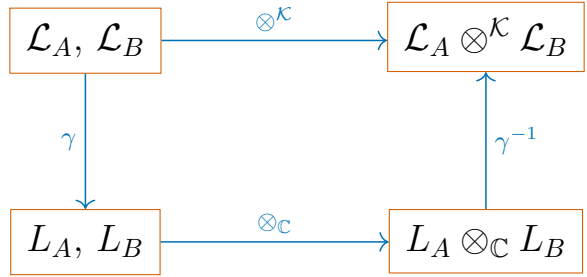
\begin{figure}[ht]
\centering
\Large
\begin{tikzcd}[
    row sep=1.8cm,
    column sep=3cm,
    cells={nodes={draw=oiVermilion, text=oiBlack, inner sep=6pt}},
    arrows={oiBlue}
]
  \mathcal{L}_A,\,\mathcal{L}_B
    \arrow[r,"\otimes^{\ks}"]
    \arrow[d,"\gamma"']
  & \mathcal{L}_A\otimes^{\ks}\mathcal{L}_B
  \\
  L_A,\,L_B
    \arrow[r,"\otimes_{\mathbb{C}}"]
  & L_A\otimes_{\mathbb{C}} L_B
    \arrow[u,"\gamma^{-1}"']
\end{tikzcd}
\caption{\textbf{Commutative diagram.}  The symplectic composition rule
  $\otimes^{\ks}$ is exactly equivalent to complexifying via $\gamma$,
  taking the standard complex tensor product $\otimes_{\mathbb{C}}$, and
  realifying via $\gamma^{-1}$.}
\label{fig:diagram}
\end{figure}

\subsection{Proof sketch of the isomorphism theorem}

The full proof occupies Section S4; we outline the key steps.

\emph{Bijection.}  Computing $\gamma(\gamma^{-1}(L))$ and
$\gamma^{-1}(\gamma(\mathcal{L}))$ using
Eqs.~\eqref{eq:gammainv}--\eqref{eq:gamma} directly yields $L$ and
$\mathcal{L}$ respectively, establishing the bijection.

\emph{Multiplicativity.}  Both $\gamma$ and $\gamma^{-1}$ satisfy
$\gamma( \mathcal{L}_A\,\mathcal{L}_B)=\gamma( \mathcal{L}_A)\gamma(\mathcal{L}_B)$, i.e.\ they
are ring homomorphisms (appendix).

\emph{Monoidal structure.}  The two key identities
\begin{equation}\label{eq:mono}
  \gamma(\mathcal{L}_A\otimes^{\ks}\mathcal{L}_B)
  = \gamma(\mathcal{L}_A)\otimes_{\mathbb{C}}\gamma(\mathcal{L}_B),
  \qquad
  \gamma^{-1}(A\otimes_{\mathbb{C}} B)
  = \gamma^{-1}(A)\otimes^{\ks}\gamma^{-1}(B),
\end{equation}
are established in appendix.  Together they
constitute the isomorphism of monoidal quantum theories. For the connection with the balanced tensor product (see section S8)

\section{Why Renou et al.'s construction fails}

The realification introduced in Eq.~(1) of Ref.~\cite{Renou2021}
associates to each complex density matrix $\rho$ a real matrix
$\rho^{\mathbb{R}}$ on a space of doubled dimension via the same
substitution $\ii\mapsto\tau$.  This is the map $\gamma^{-1}$
restricted to density matrices.  The crucial point, however, concerns
composite systems.  For a bipartite state $\rho_{AB}$, Renou et al.\
form the composite real space as $\hs_A^{\mathbb{R}}\otimes_{\mathbb{R}}\hs_B^{\mathbb{R}}$,
i.e.\ they use the standard Kronecker product on the doubled spaces.
This is not the symplectic product $\otimes^{\ks}$; it is an
incompatible operation that does not commute with the complex structure. It is important to stress that this divergence is not confined to the
choice of tensor product in the abstract sense: because $\otimes_{\mathbb{R}}$
and $\otimes^{\mathcal{K}}$ act on the same single-system blocks but combine
them differently, the resulting composite-system matrices and state vectors
are themselves different objects, living in real spaces of different
dimension ($4mn$ versus $2mn$), and obeying different algebraic relations.
The composition rule and the representation of the composite state are thus
two faces of the same failure, not independent choices.
Algebraically,
\begin{align}
  \rho^{\mathbb{R}}_{AB}
  &= \gamma^{-1}(\rho_A)\otimes_{\mathbb{R}}\gamma^{-1}(\rho_B)
  \;\neq\;
  \gamma^{-1}(\rho_A)\otimes^{\ks}\gamma^{-1}(\rho_B)
  = \gamma^{-1}(\rho_A\otimes_{\mathbb{C}}\rho_B).
\end{align}

The inequality is strict as soon as $\rho_A$ or $\rho_B$ has a non-zero
imaginary part.  As a result, the real composite space of Renou et al.\
is a strict enlargement of the physically relevant one:
$\dim(\hs_A^{\mathbb{R}}\otimes_{\mathbb{R}}\hs_B^{\mathbb{R}})
 = 4\dim(\hs_A)\dim(\hs_B)$,
versus
$\dim(\ks_A\otimes^{\ks}\ks_B)
 = 2\dim(\hs_A)\dim(\hs_B)$
(matching standard quantum mechanics).  The extra dimensions carry no
physical content; they arise from tensor-mixing the mathematical
encoding structure (the $\id_2/\tau$ block) with the physical degrees of
freedom.

\begin{remark}[Source independence]\label{rem:sources}
This structural point is closely related to the analysis of Hoffreumon--Woods~\cite{Hoffreumon2026}: the ``product-state independence'' of sources assumed in Ref.~\cite{Renou2021} requires source states to be Kronecker-product states in the doubled space. Our K\"{a}hler-compatible states $\gamma^{-1}(\ket{\psi_A})\otimes^{\ks} \gamma^{-1}(\ket{\psi_B})$ satisfy the physical (operational) independence condition but are not Kronecker-product states in the doubled space. This is why both approaches agree. Consequently, the assumption of real separability used by Renou et al.\ is strictly stronger than mere operational independence, in the sense that it is more restrictive. It imposes additional, hidden constraints that go beyond what is strictly necessary.
\end{remark}
\section{Bell-inequality violations}

\subsection{CHSH inequality}

Renou et al.\ showed~\cite{Renou2021} that the standard real formulation
can maximally violate the CHSH inequality, and this
remains true in our framework.  The K\"{a}hler space Pauli matrices are
(Appendix, Eqs.\ (S9))
\begin{equation}
  \sigma_x^{\ks} = \id_2\otimes\sigma_x,
  \quad
  \sigma_y^{\ks} = \tau\otimes\tau,
  \quad
  \sigma_z^{\ks} = \id_2\otimes\sigma_z.
\end{equation}
These satisfy the K\"{a}hler space anti-commutation relations
$\{\sigma_a^{\ks},\sigma_b^{\ks}\}=2\delta_{ab}\id_4$, inherited from
the complex algebra via $\gamma$.  A Bell state $\ket{\psi^-}_{\hs}$ maps to
$\ket{\psi^-}_{\ks}=\gamma^{-1}(\ket{\psi^-}_{\hs})$ (Eq.~\eqref{eq:statevec}),
and the standard CHSH combination yields
$g(\ket{\psi^-}_{\ks},\hat{C}_{\mathrm{CHSH}}^{\ks}\ket{\psi^-}_{\ks})
=2\sqrt{2}$, the Tsirelson bound.

\subsection{CHSH$_3$ inequality and the $6\sqrt{2}$ violation}

The $\mathrm{CHSH}_3$ functional for the tripartite entanglement-swapping
scenario (Alice, Bob, Charlie; two independent sources) was constructed
in Ref.~\cite{Renou2021} precisely to separate real and complex quantum
theories.  It is defined as
\begin{equation}
  \mathrm{CHSH}_3
  = \mathrm{CHSH}(1,2;1,2)+\mathrm{CHSH}(1,3;3,4)+\mathrm{CHSH}(2,3;5,6),
\end{equation}
with a sign adaptation functional $\mathscr{T}_b$ conditioned on Bob's
Bell-measurement outcome $b=(b_1b_2)$ (see Section S6) for the complete
definition).  The maximum value achievable by local-realist theories is
$6$; standard real quantum theory (with $\otimes_{\mathbb{R}}$) attains
at most $\approx 7.66$; complex quantum theory reaches $6\sqrt{2}\approx 8.49$.

In our K\"{a}hler space framework, for Bob's outcome $b=00$ and the Bell state
$\ket{\phi^+}$, the relevant operator in K\"{a}hler space is
\begin{equation}\label{eq:T00}
  \hat{\mathscr{T}}_{00}^{\ks}
  = 2\sqrt{2}\,\bigl(
    \sigma_z^{\ks}\otimes^{\ks}\sigma_z^{\ks}
    + \sigma_x^{\ks}\otimes^{\ks}\sigma_x^{\ks}
    - \sigma_y^{\ks}\otimes^{\ks}\sigma_y^{\ks}
  \bigr),
\end{equation}
with the Bell state mapped as
$\ket{\phi^+}_{\ks}=\gamma^{-1}(\ket{\phi^+})=\id_2\otimes\ket{\phi^+}$.
A direct computation (see Section S6) gives
\begin{equation}\label{eq:chsh3result}
  \bra{\phi^+}_{\ks}\,
  \hat{\mathscr{T}}_{00}^{\ks}\,
  \ket{\phi^+}_{\ks}
  = 6\sqrt{2}\,\id_2,
\end{equation}
so the metric expectation value is
$g(\ket{\phi^+}_{\ks},\hat{\mathscr{T}}_{00}^{\ks}\ket{\phi^+}_{\ks})
=\tfrac{1}{2}\Tc[6\sqrt{2}\,\id_2]=6\sqrt{2}$.
This is the maximum quantum value, achieved using only real
arithmetic.  The critical algebraic ingredient is the K\"{a}hler space identity
$\sigma_x^{\ks}\sigma_y^{\ks}=J\sigma_z^{\ks}$, where $J=\tau\otimes\id$
replaces the imaginary unit, together with the strict anti-commutation
$\{\sigma_x^{\ks},\sigma_y^{\ks}\}=0$ (see Section S6).  Both relations
are forbidden in the Renou et al.\ real formalism (which restricts to
real symmetric operators and $\otimes_{\mathbb{R}}$) but hold in $\ks$.

\section{Discussion}

Our results establish that the Renou et al.\ no-go theorem~\cite{Renou2021}
does not apply to K\"{a}hler space quantum mechanics.  The theorem is a
theorem about a particular real formulation - one that uses the
standard Kronecker product on doubled spaces - not about all possible
real formulations.  Our isomorphism theorem shows that a perfectly
adequate real formulation exists, provided one uses the algebraically
compatible composition rule $\otimes^{\ks}$.

\paragraph{Connection to Hoffreumon--Woods.}
Our results are strongly consonant with those of Hoffreumon and
Woods~\cite{Hoffreumon2025,Hoffreumon2026}.  Their 2025 paper shows
operationally that a real quantum theory with representation locality is
possible; our work provides the explicit constructive realisation (the
$\gamma$ map and $\otimes^{\ks}$ rule) and verifies the isomorphism by
direct computation.  Their 2026 paper additionally shows that the
``product-state independence'' assumption in Ref.~\cite{Renou2021} is
experimentally untestable---an independent argument for the same
conclusion.  Remark~\ref{rem:sources} explains how the two perspectives
are related.

\paragraph{The ``nonlocality'' objection.}
Feng, Ren, and Vedral~\cite{Feng2025} argued that any modified tensor product of the Hoffreumon--Woods type introduces a ``fundamental nonlocal map''. In our framework, the $\id_2/\tau$ block structure is the complex structure $J$ of the K\"{a}hler space---a fixed geometric attribute of the space, not a dynamical physical system. No physical ancilla is introduced; the apparent nonlocality is an artefact of interpreting the encoding structure as a physical degree of freedom. Local operations in $\hs$ (unitary evolution $U_A\otimes\id_B$, CPTP maps $\Phi_A\otimes\mathcal{I}_B$) map directly to local operations in $\ks$: $U_A^{\ks}\otimes^{\ks}\id_B^{\ks}$ acting on $\rho_{AB}^{\ks}$ (see Section S7). Crucially, this identical geometric logic applies to the mapping developed by Hoffreumon and Woods~\cite{Hoffreumon2025,Hoffreumon2026}; because their formulation also relies on a fixed encoding structure rather than a physical ancilla, the nonlocality critique is equally invalid for both results.

\paragraph{Geometric interpretation.}
Writing $\psi=q+\ii p$, the K\"{a}hler space is $\mathbb{R}^{2N}$ with the
triple $(g,\omega,J)$ satisfying $g(\cdot,\cdot)=\omega(\cdot,J\cdot)$.
The symmetry group is $U(N)=O(2N)\cap Sp(2N,\mathbb{R})$.  Complex
numbers in quantum mechanics are therefore not fundamental: they
encode the real geometric structure of a K\"{a}hler manifold governing
phase and composition.  This view harmonises with Volovich's independent
analysis~\cite{Volovich2025a,Volovich2025b}.

\section{Conclusions}
We have demonstrated:
(i) a bijection $\gamma$ between complex Hilbert space $\hs$ and K\"{a}hler
  space $\ks$;
(ii) an explicit symplectic composition rule $\otimes^{\ks}$ making the
  bijection monoidal;
(iii) the maximal $\mathrm{CHSH}_3$ violation $6\sqrt{2}$ in purely real
  arithmetic;
(iv) the precise reason why the Renou et al.\ construction fails (using
  $\otimes_{\mathbb{R}}$ in place of $\otimes^{\ks}$ is not a monoidal
  equivalence of theories).
Together these results establish that complex numbers are not
necessary for quantum mechanics: they encode a deeper real
geometric structure governing phase and composition.

To conclude, our approach should on no account be read 
as an attempt to expel complex numbers from quantum formalism.
The complex Hilbert space $\mathcal{H}$, or equivalently the real 
Kähler space $\mathcal{K}$, carries a fundamental involutive symmetry 
$\ii\mapsto -\ii$ (complex conjugation), whose geometric avatar in 
$\mathcal{K}$ is the involution $J\mapsto -J$ of the compatible complex 
structure. The physical ramifications of this single structural feature 
are far-reaching: Wigner's antiunitary time-reversal operator, charge 
conjugation, the CPT theorem, and the canonical decomposition of any 
operator into its self-adjoint and skew-adjoint parts all flow from it.
A bare real Hilbert space, lacking $J$ altogether, is entirely devoid 
of this structure and therefore cannot support these foundational 
symmetries of nature. What our framework establishes, rather, is that 
the indispensable role of $\ii$ is already fully encoded in the 
symplectic geometry of $\mathcal{K}$: complex structure is not imposed 
from without but is a \emph{derived} feature of the real, 
geometrically natural formulation. The imaginary unit is not 
discarded - it is \emph{explained}.

\section{Acknowledgments}
We thank the financial support from the Brazilian scientific
agencies Fundação Carlos Chagas Filho
de Amparo à Pesquisa do Estado do Rio de Janeiro (FAPERJ), Coordenação de Aperfeiçoamento
de Pessoal de Nível Superior (CAPES) and Conselho
Nacional de Desenvolvimento Científico e Tecnológico
(CNPq).

\section{Author contributions}
A.C.M., E.M.F.C., and J.-P.G. conceptualized the framework. A.C.M. developed the mathematical proofs and carried out the CHSH$_3$ calculations. All authors discussed the results, interpreted the geometric framework, and contributed to writing the manuscript.

\appendix
\section{Overview}\label{sec:S0}

This Appendices presents the detailed mathematical
framework supporting the main text.  We prove the isomorphism between
\textit{complex quantum mechanics}, i.e., QM formulated on a complex Hilbert space $\hs$)
and our real formulation (formulated on a K\"{a}hler  space $\ks$), and we carry
out the explicit Bell-inequality calculations.

\medskip\noindent
\textbf{Structure.}
Section~\ref{sec:S0} revisits  the concept of tensor product
Section~\ref{sec:S1} introduces tensor contraction.
Section~\ref{sec:S2} develops the K\"{a}hler space framework: definition, maps
$\gamma$ and $\gamma^{-1}$, bijection proof, fundamental relations, and
multiplicativity.
Section~\ref{sec:S3} defines the symplectic composition rule $\otimes^{\ks}$
and proves the two isomorphism lemmas.
Section~\ref{sec:S4} treats the CHSH inequality.
Section~\ref{sec:S5} treats the $\mathrm{CHSH}_3$ inequality.
Section~\ref{sec:S6} discusses local maps. Section~\ref{sec:S7} is about the balanced tensor product.

\medskip\noindent
\textbf{Notation.}  Throughout: $\id$ denotes the identity operator
(subscript indicates dimension when needed); $\tau=\begin{pmatrix}0&-1\\1&0\end{pmatrix}
= -\ii\sigma_y$; $\sigma_x,\sigma_y,\sigma_z$ are the usual Pauli matrices;
$\Tc$ denotes tensor contraction (Section~\ref{sec:S1}); $\ii=\sqrt{-1}$;
superscript $\ks$ on an operator or state denotes its K\"{a}hler space
representative.

\medskip\noindent
\textbf{Tensor product.}
 In essence, the Cartesian product, traditionally employed to model two systems independently in classical physics, undergoes a transformative process known as \textit{quantization}. This metamorphosis results in the emergence of the tensor product of two vector spaces, a fundamental framework indispensable for understanding the intricacies of quantum mechanical systems. 

\begin{center}
\emph{In what sense  tensor product is distinct of Cartesian product?}
\end{center}

  A vector space $V$ over a field $\mathbb{F}$, \textit{e.g.}, $\mathbb{Q}$ or $\mathbb{R}$ or $\mathbb{C}$,   is a set whose elements or vectors, may be added together and multiplied (``scaled'') by elements (``scalars'') in $\mathbb{F}$. Two essential properties must be satisfied: the distributivity of scalar multiplication with respect to the vector addition and the distributivity of scalar multiplication with respect to field addition. 
Then, the tensor product $V \otimes_\mathbb{F} W$ of two vector spaces $V$ and $W$ (over the \textbf{same} field) is the vector space over $\mathbb{F}$  consisting of all bilinear forms from $V \times W$ to $\mathbb{F}$. 

As a consequence,  $$\mathrm{dim}(V\otimes_\mathbb{F} W)= \mathrm{dim}V\mathrm{dim}W$$ while $$\mathrm{dim}(V\times W)= \mathrm{dim}V+\mathrm{dim}W.$$  

Usually,  one  simplifies $\otimes= \otimes_\mathbb{F}$ when there is no risk of confusion. On the other hand we will introduce the symbol $\otimes^{\ks}$ within the context of K\"{a}hler spaces defined in this material.  

The tensor product extends naturally to linear operators. 
If $A: V\to V$ and $B: W\to W$ are linear maps, their tensor product
$A\otimes_\mathbb{F} B: V\otimes_\mathbb{F} W \to V\otimes_\mathbb{F} W$
is the unique linear map defined on simple tensors by
\begin{equation}
  (A\otimes_\mathbb{F} B)(v\otimes w) = (Av)\otimes(Bw),
\end{equation}
and extended by linearity.  In matrix terms, if $A\in M_m(\mathbb{F})$
and $B\in M_n(\mathbb{F})$, then $A\otimes_\mathbb{F} B$ is the
$mn\times mn$ Kronecker product.  Crucially, the field $\mathbb{F}$
must be the same for both factors: this is precisely the point at issue
in the Renou et al.\ experiment, where $\mathbb{R}$-linear and
$\mathbb{C}$-linear tensor products are conflated.  Our K\"{a}hler framework
resolves this by working exclusively over $\mathbb{R}$ while encoding
the complex structure in the automorphism $J$, so that
$\otimes^{\ks}$ (Definition~\ref{def:symptp}) plays the role of
$\otimes_\mathbb{C}$ without ever leaving $\mathbb{R}$.
\section{Tensor contraction}\label{sec:S1}

In traditional quantum mechanics the partial trace is the canonical
reduction operation for density matrices.  We generalise it to
rectangular objects via the tensor contraction $\Tc$.

\begin{definition}[Tensor contraction]\label{def:tc}
  Let $M = A \otimes C$ where $A \in \mathbb{C}^{n\times n}$ is square and
  $C \in \mathbb{C}^{m\times k}$ is arbitrary.  The tensor contraction
  over the first factor is
  \begin{equation}\label{eq:sm:tc}
    \Tc(A \otimes C)
    = \sum_{i,j=1}^{n} \delta^{ij} (A)_{ij}\, C
    = (\Tr A)\, C.
  \end{equation}
  The operation is extended to sums of simple tensors by linearity.
\end{definition}

When $C$ is square, $\Tc$ reduces to the standard partial trace $\Tr_A$
over subsystem $A$.  When $C$ is rectangular, $\Tc$ provides the linear
extraction rule used in the complexification map $\gamma$.

\textbf{Key instances used below.}
$\Tc[\id_2\otimes C] = 2C$;
$\Tc[\tau\otimes C] = 0$.

\section{K\"{a}hler space framework}\label{sec:S2}

\subsection{Definition of a K\"{a}hler space}

\begin{definition}[K\"{a}hler space]\label{def:kahler}
A K\"{a}hler space \cite{Volovich2025a,Volovich2025b} is a quadruplet $(\mathbb{V},g,\omega,J)$ where:
\begin{itemize}[leftmargin=2em]
  \item $\mathbb{V}$ is a \textit{real} vector space;
  \item $g:\mathbb{V}\times\mathbb{V}\to\mathbb{R}$ is a
    positive-definite bilinear form (inner product/metric);
  \item $\omega:\mathbb{V}\times\mathbb{V}\to\mathbb{R}$ is a
    non-degenerate skew-symmetric bilinear form (symplectic form);
  \item $J:\mathbb{V}\to\mathbb{V}$ is a linear automorphism
    (complex structure) satisfying $J^2=-\id$.
\end{itemize}
These components obey the compatibility relations
\begin{align}
  g(x,y) &= \omega(x,Jy),\label{eq:sm:fund1}\\
  \omega(Jx,Jy) &= \omega(x,y),\label{eq:sm:fund2}
\end{align}
for all $x,y\in\mathbb{V}$.
\end{definition}


\subsection{Realification: from $\mathcal{H}$ to $\mathcal{K}$}

Let $L = X + \ii Y$ be a linear operator on $\hs\cong\mathbb{C}^N$,
where $X = \operatorname{Re}(L)$ and $Y = \operatorname{Im}(L)$ are
real $N\times N$ matrices.  Its K\"{a}hler space representative is
\begin{equation}\label{eq:sm:gammainv}
  \mathcal{L} = \gamma^{-1}(L)
  = \id_2\otimes X + \tau\otimes Y
  = \begin{pmatrix}X & -Y \\ Y & X\end{pmatrix}.
\end{equation}
Equivalently, using $L^* = X - \ii Y$:
\begin{equation}
  \gamma^{-1}(L)
  = \frac{1}{2}\!\left[
    \id_2\otimes(L+L^*) - \sigma_y\otimes(L-L^*)
  \right].
\end{equation}
The same formula applies to state vectors.  For
$\ket{\psi}_{\hs} = \ket{R}+\ii\ket{I}$ with
$\ket{R},\ket{I}\in\mathbb{R}^N$:
\begin{equation}\label{eq:sm:psikahler}
  \ket{\psi}_{\ks}
  = \gamma^{-1}(\ket{\psi}_{\hs})
  = \id_2\otimes\ket{R} + \tau\otimes\ket{I}
  = \begin{pmatrix}\ket{R} & -\ket{I} \\ \ket{I} & \ket{R}\end{pmatrix},
\end{equation}
a $2N\times 2$ real matrix ($\ket{\psi}_{\hs}$ is an $N\times 1$ column
vector, so $\ket{\psi}_{\ks}$ has dimensions $2N\times 2$).  The
conjugate bra-vector is
\begin{equation}
  \bra{\psi}_{\ks}
  = \bigl(\ket{\psi}_{\ks}\bigr)^\dagger
  = \id_2\otimes\bra{R} - \tau\otimes\bra{I}
  = \begin{pmatrix}\bra{R} & \bra{I} \\ -\bra{I} & \bra{R}\end{pmatrix}.
\end{equation}

\paragraph{Worked example: K\"{a}hler–Pauli matrices.}
The Pauli matrices $\sigma_x,\sigma_y,\sigma_z$ map as follows.
Since $\sigma_x$ and $\sigma_z$ are purely real,
\begin{equation}\label{eq:sm:KahPaux}
  \sigma_x^{\ks}
  = \id_2\otimes\sigma_x
  = \begin{pmatrix}0&1&0&0\\1&0&0&0\\0&0&0&1\\0&0&1&0\end{pmatrix},
  \quad
  \sigma_z^{\ks}
  = \id_2\otimes\sigma_z
  = \begin{pmatrix}1&0&0&0\\0&-1&0&0\\0&0&1&0\\0&0&0&-1\end{pmatrix}.
\end{equation}
For $\sigma_y = \ii\tau$ (so $\operatorname{Re}(\sigma_y)=0$,
$\operatorname{Im}(\sigma_y)=\tau$):
\begin{equation}\label{eq:sm:KahPauy}
  \sigma_y^{\ks}
  = \tau\otimes\tau
  = \begin{pmatrix}0&-1\\1&0\end{pmatrix}
    \otimes
    \begin{pmatrix}0&-1\\1&0\end{pmatrix}
  = \begin{pmatrix}0&0&0&1\\0&0&-1&0\\0&-1&0&0\\1&0&0&0\end{pmatrix}.
  \tag{\ref{eq:sm:KahPaux}$'$}
\end{equation}
\label{eq:sm:KahPauz}

The Kähler complex structure $J$ (Eq.~(4) of the main text) acts as
\begin{equation}\label{eq:sm:Jact}
  \sigma_x^{\ks}\sigma_y^{\ks} = J\sigma_z^{\ks},
  \qquad
  J = \tau\otimes\id_2,
\end{equation}
so $J$ replaces the imaginary unit $\ii$ in the Kähler-Pauli algebra.

\paragraph{Worked example: Kähler space qubit.}
Consider a general qubit
$\ket{\psi}_{\hs}=\cos(\theta/2)\ket{0}_{\hs}
 + e^{\ii\phi}\sin(\theta/2)\ket{1}_{\hs}$.
Writing $e^{\ii\phi}=\cos\phi+\ii\sin\phi$ we identify
$\ket{R}=\cos(\theta/2)\ket{0}_{\hs}+\cos(\phi)\sin(\theta/2)\ket{1}_{\hs}$
and
$\ket{I}=\sin(\phi)\sin(\theta/2)\ket{1}_{\hs}$.
Then $\ket{\psi}_{\ks}$ is the $4\times 2$ real matrix
\[
  \ket{\psi}_{\ks}
  = \begin{pmatrix}
      \cos(\theta/2) & 0 \\
      \cos\phi\sin(\theta/2) & -\sin\phi\sin(\theta/2) \\
      0 & \cos(\theta/2) \\
      \sin\phi\sin(\theta/2) & \cos\phi\sin(\theta/2)
    \end{pmatrix}.
\]
The matrices $\id_2$ and $\tau$ carry the complex structure of the K\"{a}hler space; they are not a physical qubit.

\subsection{Complexification: from $\mathcal{K}$ to $\mathcal{H}$}

\begin{definition}[Complexification map $\gamma$]
  For $\mathcal{L}\in\ks$, define
  \begin{equation}\label{eq:sm:gamma}
    L = \gamma(\mathcal{L})
    = \frac{1}{2}\!\left(
      \Tc[\mathcal{L}]
      + \Tc[(-\sigma_y\otimes\id_N)\,\mathcal{L}]
    \right).
  \end{equation}
  When $\mathcal{L}$ is block-square this reduces to
  $L = \tfrac{1}{2}(\Tr_1[\mathcal{L}]
    + \Tr_1[(-\sigma_y\otimes\id_N)\mathcal{L}])$.
\end{definition}

\textbf{Worked example.}
Applying $\gamma$ to $\sigma_y^{\ks}=\tau\otimes\tau$:
\begin{align*}
  \gamma(\sigma_y^{\ks})
  &= \tfrac{1}{2}\Bigl\{
    \Tc[\tau\otimes\tau]
    + \Tc[(-\sigma_y\otimes\id)(\tau\otimes\tau)]
  \Bigr\} \\
  &= \tfrac{1}{2}\Bigl\{
    0
    + \Tc[(-\sigma_y\tau)\otimes\tau]
  \Bigr\}.
\end{align*}

Using  $\sigma_y = \ii\tau$, so
$-\sigma_y\otimes\id = -\ii\tau\otimes\id$.
Then
$(-\sigma_y\otimes\id)(\tau\otimes\tau)
 = (-\ii\tau\cdot\tau)\otimes(\id\cdot\tau)
 = (-\ii\cdot(-\id_2))\otimes\tau
 = \ii\id_2\otimes\tau$.
Thus
$\Tc[\ii\id_2\otimes\tau] = \ii\cdot(\Tr\id_2)\cdot\tau = 2\ii\tau$,
and
$\gamma(\sigma_y^{\ks})=\tfrac{1}{2}\cdot 2\ii\tau=\ii\tau=\sigma_y$. \checkmark

\subsection{Bijection theorem}\label{subsec:bijection}

\begin{theorem}[Bijection]\label{thm:bijection}
  The maps $\gamma$ (Eq.~\eqref{eq:sm:gamma}) and $\gamma^{-1}$
  (Eq.~\eqref{eq:sm:gammainv}) are mutually inverse bijections between
  $\hs$ and $\ks$.
\end{theorem}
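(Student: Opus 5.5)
The proof is a direct two-sided composition check, using only the linearity of the tensor contraction $\Tc$ (Definition~\ref{def:tc}) and its two key instances $\Tc[\id_2\otimes C]=2C$ and $\Tc[\tau\otimes C]=0$. A subtlety comes first: the statement must be read with $\ks$ taken to be the \emph{image} of $\gamma^{-1}$, namely the real-linear space of block matrices of the form $\id_2\otimes X+\tau\otimes Y$ (equivalently, those commuting with $J=\tau\otimes\id_N$). A general real $2N\times 2N$ matrix also has $\sigma_x\otimes(\cdot)$ and $\sigma_z\otimes(\cdot)$ components, and $\gamma$ kills them, so $\gamma$ is not injective on all of $M_{2N}(\mathbb{R})$. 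I would state this identification explicitly and then verify $\gamma\circ\gamma^{-1}=\mathrm{id}_{\hs}$ and $\gamma^{-1}\circ\gamma=\mathrm{id}_{\ks}$. The same computation covers state vectors, since Eq.~\eqref{eq:sm:psikahler} has the identical form with $N\times 1$ blocks, and $\Tc$ is defined for rectangular second factors.

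\emph{Step 1: $\gamma(\gamma^{-1}(L))=L$.} Take $L=X+\ii Y$, so that $\mathcal{L}=\id_2\otimes X+\tau\otimes Y$. The first term is $\Tc[\mathcal{L}]=2X+0=2X$. For the second term, write $-\sigma_y=-\ii\tau$, exactly as in the worked example, and use the mixed-product rule:
\[
(-\ii\tau\otimes\id_N)(\id_2\otimes X+\tau\otimes Y)=-\ii\tau\otimes X+\ii\id_2\otimes Y .
\]
Its contraction is $0+2\ii Y$. Hence $\gamma(\mathcal{L})=\tfrac12(2X+2\ii Y)=L$.

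\emph{Step 2: $\gamma^{-1}(\gamma(\mathcal{L}))=\mathcal{L}$ for $\mathcal{L}\in\ks$.} Any $\mathcal{L}\in\ks$ can be written as $\id_2\otimes X+\tau\otimes Y$ with $X,Y$ real. Step 1 gives $\gamma(\mathcal{L})=X+\ii Y$, and then Eq.~\eqref{eq:sm:gammainv} returns $\id_2\otimes X+\tau\otimes Y=\mathcal{L}$. For this step to be honest, the decomposition of $\mathcal{L}$ must be unique. That holds because $\id_2$ and $\tau$ are linearly independent, so the top-left block reads off $X$ and the bottom-left block reads off $Y$.

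The main obstacle is precisely this domain issue, not the algebra. One must justify that $\gamma$ outputs an operator whose real part is the real matrix $X$ and whose imaginary part is $Y$; this needs $X,Y$ real, which holds on $\ks$. One must also remark that $\gamma$ fails to be injective outside $\ks$. Once the domain is fixed as above, the two mutual-inverse identities together give bijectivity, completing the proof.
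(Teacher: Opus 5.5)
Your proof is correct and follows the paper's argument: it is the same two-sided composition check, using $\Tc[\id_2\otimes C]=2C$, $\Tc[\tau\otimes C]=0$ and $(-\sigma_y)\tau=\ii\id_2$ to get $\gamma(\id_2\otimes X+\tau\otimes Y)=X+\ii Y$. Your explicit restriction of $\ks$ to the $J$-commuting block matrices $\id_2\otimes X+\tau\otimes Y$ is a worthwhile clarification, since $\gamma$ is not injective on all of $M_{2N}(\mathbb{R})$; the paper assumes this tacitly when it writes an arbitrary $\mathcal{L}\in\ks$ in that form.
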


\begin{proof}
We prove the two statements.

\medskip
\noindent\textbf{Statement 1:}
$\gamma^{-1}(\gamma(\mathcal{L}))=\mathcal{L}$
for all $\mathcal{L}\in\ks$.

Let $\mathcal{L}=\id_2\otimes X+\tau\otimes Y$.
By linearity of $\Tc$:
\begin{align*}
  \Tc[\mathcal{L}]
  &= \Tc[\id_2\otimes X]+\Tc[\tau\otimes Y]
   = 2X + 0 = 2X,\\
  \Tc[(-\sigma_y\otimes\id)\mathcal{L}]
  &= \Tc[(-\sigma_y\otimes\id)(\id_2\otimes X)]
   + \Tc[(-\sigma_y\otimes\id)(\tau\otimes Y)]\\
  &= \Tc[-\sigma_y\otimes X] + \Tc[\ii\id_2\otimes Y]
   = 0 + 2\ii Y = 2\ii Y,
\end{align*}
where we used $(-\sigma_y)(\tau)=(-\ii\tau)(\tau)=-\ii(-\id_2)=\ii\id_2$.
Hence $\gamma(\mathcal{L})=\tfrac{1}{2}(2X+2\ii Y)=X+\ii Y$.
Applying $\gamma^{-1}$:
$\gamma^{-1}(X+\ii Y)=\id_2\otimes X+\tau\otimes Y=\mathcal{L}$.

\medskip
\noindent\textbf{Statement 2:}
$\gamma(\gamma^{-1}(L))=L$ for all $L\in\hs$.

Let $L=X+\ii Y$.  Then $\gamma^{-1}(L)=\id_2\otimes X+\tau\otimes Y$,
and by the calculation in Statement 1, $\gamma(\id_2\otimes X+\tau\otimes Y)
= X+\ii Y = L$.
\end{proof}

\subsection{Fundamental K\"{a}hler relations}\label{subsec:fund}

Define two states
$\ket{\psi_j}_{\hs}=\ket{R_j}+\ii\ket{I_j}$, $j=1,2$.
The Hilbert-space inner product is
\begin{equation}
  \braket{\psi_2}{\psi_1}_{\hs}
  = \bigl(\braket{R_2}{R_1}+\braket{I_2}{I_1}\bigr)
  + \ii\bigl(\braket{R_2}{I_1}-\braket{I_2}{R_1}\bigr),
\end{equation}
while its K\"{a}hler space version is
$\braket{\psi_2}{\psi_1}_{\ks}
 = \id_2\otimes(\braket{R_2}{R_1}+\braket{I_2}{I_1})
 + \tau\otimes(\braket{R_2}{I_1}-\braket{I_2}{R_1})$.
The \emph{metric} (real-valued inner product) in $\ks$ is
\begin{equation}\label{eq:sm:metric}
  g\!\bigl(\ket{\psi_2}_{\ks},\ket{\psi_1}_{\ks}\bigr)
  = \tfrac{1}{2}\,\Tc\!\bigl[\braket{\psi_2}{\psi_1}_{\ks}\bigr]
  = \operatorname{Re}\braket{\psi_2}{\psi_1}_{\hs},
\end{equation}
and the \emph{symplectic form} is
\begin{equation}\label{eq:sm:omega}
  \omega\!\bigl(\ket{\psi_2}_{\ks},\ket{\psi_1}_{\ks}\bigr)
  = \tfrac{1}{2}\,\Tc\!\bigl[-J\braket{\psi_2}{\psi_1}_{\ks}\bigr]
  = \operatorname{Im}\braket{\psi_2}{\psi_1}_{\hs},
\end{equation}
with complex structure $J=\tau\otimes\id_N$.
One directly verifies Eqs.~\eqref{eq:sm:fund1}--\eqref{eq:sm:fund2}.

\subsection{Multiplicativity of $\gamma$}\label{subsec:mult}

\begin{lemma}[Multiplicativity]\label{lem:mult}
  Both $\gamma$ and $\gamma^{-1}$ are ring homomorphisms:
  $\gamma(\mathcal{L}_A\mathcal{L}_B)=\gamma(\mathcal{L}_A)\gamma(\mathcal{L}_B)$ and
  $\gamma^{-1}(L_AL_B)=\gamma^{-1}(L_A)\gamma^{-1}(L_B)$.
\end{lemma}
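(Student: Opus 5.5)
I will prove the statement for $\gamma^{-1}$ first, by direct block-matrix computation, and then deduce the statement for $\gamma$ from the Bijection Theorem~\ref{thm:bijection}. Write $L_A=X_A+\ii Y_A$ and $L_B=X_B+\ii Y_B$ with $X_\bullet,Y_\bullet$ real $N\times N$ matrices. By Eq.~\eqref{eq:sm:gammainv}, $\gamma^{-1}(L_A)=\id_2\otimes X_A+\tau\otimes Y_A$, and similarly for $B$. I will use the mixed-product property of the Kronecker product, $(P\otimes Q)(R\otimes S)=(PR)\otimes(QS)$, together with $\id_2\tau=\tau\id_2=\tau$ and $\tau^2=-\id_2$. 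Expanding the product gives
\begin{equation*}
\gamma^{-1}(L_A)\gamma^{-1}(L_B)=\id_2\otimes(X_AX_B-Y_AY_B)+\tau\otimes(X_AY_B+Y_AX_B).
\end{equation*}

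Next I compute the complex product directly:
\begin{equation*}
L_AL_B=(X_AX_B-Y_AY_B)+\ii(X_AY_B+Y_AX_B).
\end{equation*}
The two bracketed matrices are real, so they are exactly $\operatorname{Re}(L_AL_B)$ and $\operatorname{Im}(L_AL_B)$. Applying Eq.~\eqref{eq:sm:gammainv} then reproduces the expression above, which proves $\gamma^{-1}(L_AL_B)=\gamma^{-1}(L_A)\gamma^{-1}(L_B)$. Additivity and $\gamma^{-1}(\id_N)=\id_{2N}$ are immediate from the definition, so $\gamma^{-1}$ is a ring homomorphism.

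For $\gamma$, the domain must be the image of $\gamma^{-1}$, namely the matrices of the form $\id_2\otimes X+\tau\otimes Y$ (those commuting with $J$). Given $\mathcal{L}_A,\mathcal{L}_B$ of this form, set $L_A=\gamma(\mathcal{L}_A)$ and $L_B=\gamma(\mathcal{L}_B)$. Statement 1 of Theorem~\ref{thm:bijection} gives $\mathcal{L}_A=\gamma^{-1}(L_A)$ and $\mathcal{L}_B=\gamma^{-1}(L_B)$. Hence
\begin{equation*}
\gamma(\mathcal{L}_A\mathcal{L}_B)=\gamma\bigl(\gamma^{-1}(L_AL_B)\bigr)=L_AL_B,
\end{equation*}
using the first part and then Statement 2. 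The result is the product $\gamma(\mathcal{L}_A)\gamma(\mathcal{L}_B)$, as required.

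The only delicate point is this domain issue. $\gamma$ as written via $\Tc$ is defined on all of $M_{2N}(\mathbb{R})$, but there it is not multiplicative. A generic block matrix also has components along $\sigma_z\otimes\cdot$ and $\sigma_x\otimes\cdot$, and these are killed by $\gamma$ even though their products contribute nonzero $\id_2$ terms. I will therefore state explicitly that $\ks$-operators means the commutant of $J$, which is closed under multiplication because the expansion above stays in the span of $\id_2\otimes\cdot$ and $\tau\otimes\cdot$. Everything else is routine mixed-product bookkeeping.
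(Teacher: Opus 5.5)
Your proof is correct and follows the paper's own: the single mixed-product expansion $\gamma^{-1}(L_A)\gamma^{-1}(L_B)=\id_2\otimes(X_AX_B-Y_AY_B)+\tau\otimes(X_AY_B+Y_AX_B)$ drives both directions. The only cosmetic difference is that the paper applies $\gamma$ to this expanded form directly, whereas you pass through the bijection theorem, which encodes the same computation. Your explicit restriction of $\gamma$ to the commutant of $J$ (the paper leaves this implicit by writing $\mathcal{L}_A=\id_2\otimes X_A+\tau\otimes Y_A$) is a worthwhile clarification: as you observe, $\gamma$ extended via $\Tc$ to all of $M_{2N}(\mathbb{R})$ kills the $\sigma_z\otimes\cdot$ and $\sigma_x\otimes\cdot$ components and so fails to be multiplicative there.
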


\begin{proof}
Let $\mathcal{L}_A=\id_2\otimes X_A+\tau\otimes Y_A$ and
$\mathcal{L}_B=\id_2\otimes X_B+\tau\otimes Y_B$.
Then
\begin{align}
  \mathcal{L}_A\mathcal{L}_B
  &= (\id_2\otimes X_A)(\id_2\otimes X_B)
   + (\id_2\otimes X_A)(\tau\otimes Y_B)
   + (\tau\otimes Y_A)(\id_2\otimes X_B)
   + (\tau\otimes Y_A)(\tau\otimes Y_B)\notag\\
  &= \id_2\otimes(X_AX_B) + \tau\otimes(X_AY_B)
   + \tau\otimes(Y_AX_B) + \tau^2\otimes(Y_AY_B)\notag\\
  &= \id_2\otimes(X_AX_B-Y_AY_B)+\tau\otimes(X_AY_B+Y_AX_B),
  \label{eq:AksB}
\end{align}
using $\tau^2=-\id_2$.  Thus
$\gamma(\mathcal{L}_A\mathcal{L}_B)=(X_AX_B-Y_AY_B)+\ii(X_AY_B+Y_AX_B)
 = (X_A+\ii Y_A)(X_B+\ii Y_B)=\gamma(\mathcal{L}_A)\gamma(\mathcal{L}_B)$.

For $\gamma^{-1}$: let $L_A=X_A+\ii Y_A$, $L_B=X_B+\ii Y_B$.
Then $L_AL_B=X_AX_B-Y_AY_B+\ii(X_AY_B+Y_AX_B)$, and
$\gamma^{-1}(L_AL_B)=\id_2\otimes(X_AX_B-Y_AY_B)+\tau\otimes(X_AY_B+Y_AX_B)$,
which equals $\gamma^{-1}(L_A)\gamma^{-1}(L_B)$ by Eq.~\eqref{eq:AksB}.
\end{proof}

It is worth mentioning that $\mathcal{L}_A$ and $\mathcal{L}_B$ are not restricted to the same K\"{a}hler space, the only requirement is that the operation $\mathcal{L}_A\mathcal{L}_B$ must be well-defined. For instance $\mathcal{L}_A$ can be a linear operator and $\mathcal{L}_B$ a physical state. Also, the requirement for $L_A$ and $L_B$ is analogous.

\section{Symplectic composition rule and isomorphism}\label{sec:S3}

\subsection{Definition of $\otimes^{\ks}$}

\begin{definition}[Symplectic tensor product]\label{def:symptp}
  Let $\mathcal{L}_A=\gamma^{-1}(A)$, $\mathcal{L}_B=\gamma^{-1}(B)$
  with $A=X_A+\ii Y_A$, $B=X_B+\ii Y_B$.  Define
  \begin{equation}\label{eq:sm:symptp}
    \mathcal{L}_{AB}
    = \mathcal{L}_A\otimes^{\ks}\mathcal{L}_B
    =\begin{pmatrix}
       X_A\otimes X_B-Y_A\otimes Y_B &
      -X_A\otimes Y_B-Y_A\otimes X_B \\
       X_A\otimes Y_B+Y_A\otimes X_B &
       X_A\otimes X_B-Y_A\otimes Y_B
     \end{pmatrix},
  \end{equation}
  equivalently
  \begin{equation}\label{eq:sm:symptp2}
    \mathcal{L}_{AB}
    = \id_2\otimes(X_A\otimes X_B-Y_A\otimes Y_B)
    + \tau\otimes(X_A\otimes Y_B+Y_A\otimes X_B).
  \end{equation}
\end{definition}

\noindent
\textbf{Mnemonic.}  The rule $\otimes^{\ks}$ mimics matrix multiplication
on the $2\times 2$ block structure:
\[
  \begin{pmatrix}X_A&-Y_A\\Y_A&X_A\end{pmatrix}
  \otimes^{\ks}
  \begin{pmatrix}X_B&-Y_B\\Y_B&X_B\end{pmatrix}
  = \text{``matrix product of blocks''},
\]
where each ``product'' of blocks uses the Kronecker product $\otimes$ of
the constituent real matrices.

\noindent
\textbf{Commutative diagram.}
The equivalence $\gamma^{-1}\circ\otimes_{\mathbb{C}}
= \otimes^{\ks}\circ(\gamma^{-1}\times\gamma^{-1})$
is depicted in Fig.~\ref{fig:diagram}.

\subsection{Isomorphism lemmas}

\begin{lemma}[Realification of tensor product]\label{lem:S2}
  $\gamma^{-1}(L_A\otimes_{\mathbb{C}} L_B)
  = \gamma^{-1}(L_A)\otimes^{\ks}\gamma^{-1}(L_B)$.
\end{lemma}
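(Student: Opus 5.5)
The plan is a direct computation: expand both sides in the $\{\id_2,\tau\}$ basis and compare the resulting block decompositions.

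\emph{Left-hand side.} Write $L_A=X_A+\ii Y_A$ and $L_B=X_B+\ii Y_B$ with $X_{A,B},Y_{A,B}$ real. The first step is to expand $L_A\otimes_{\mathbb{C}}L_B$ by bilinearity of the Kronecker product over $\mathbb{C}$:
\[
L_A\otimes_{\mathbb{C}}L_B=(X_A\otimes X_B-Y_A\otimes Y_B)+\ii\,(X_A\otimes Y_B+Y_A\otimes X_B).
\]
Both brackets are Kronecker products and sums of real matrices, so they are real. This gives exactly the decomposition $\operatorname{Re}(L_A\otimes_{\mathbb{C}}L_B)=X_A\otimes X_B-Y_A\otimes Y_B$ and $\operatorname{Im}(L_A\otimes_{\mathbb{C}}L_B)=X_A\otimes Y_B+Y_A\otimes X_B$. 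Applying the definition \eqref{eq:sm:gammainv} with $N$ replaced by $N_AN_B$ then gives
\[
\gamma^{-1}(L_A\otimes_{\mathbb{C}}L_B)=\id_2\otimes(X_A\otimes X_B-Y_A\otimes Y_B)+\tau\otimes(X_A\otimes Y_B+Y_A\otimes X_B).
\]

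\emph{Right-hand side.} By construction, $\gamma^{-1}(L_A)$ and $\gamma^{-1}(L_B)$ are exactly the Kähler representatives $\mathcal{L}_A,\mathcal{L}_B$ appearing in Definition~\ref{def:symptp}, with the same $X$'s and $Y$'s. So $\gamma^{-1}(L_A)\otimes^{\ks}\gamma^{-1}(L_B)$ is, by the equivalent form \eqref{eq:sm:symptp2}, literally the expression obtained above, and the two sides agree term by term. The same identity can also be read off in block form \eqref{eq:sm:symptp} via $\id_2\otimes M+\tau\otimes M'=\begin{pmatrix}M&-M'\\M'&M\end{pmatrix}$.

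\emph{Main obstacle.} The algebra is essentially tautological; the only point requiring care is that $\otimes^{\ks}$ is well defined. Its definition uses the real and imaginary parts extracted from $\mathcal{L}_A$ and $\mathcal{L}_B$. We must know that every Kähler block matrix of the form $\id_2\otimes X+\tau\otimes Y$ determines $(X,Y)$ uniquely, so that $\otimes^{\ks}$ is a genuine function on $\ks$. This follows from Theorem~\ref{thm:bijection}: $\gamma$ recovers $X+\ii Y$, and hence the pair $(X,Y)$. The same argument covers states, i.e.\ rectangular $N\times 1$ blocks, since \eqref{eq:sm:gammainv} and Definition~\ref{def:symptp} are insensitive to the shapes of $X,Y$ and the Kronecker product is defined for rectangular factors. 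I would state this explicitly so that the lemma applies both to operators and to state vectors $\ket{\psi}_{\ks}$. Applying $\gamma$ to both sides and invoking Theorem~\ref{thm:bijection} would then also yield the companion identity $\gamma(\mathcal{L}_A\otimes^{\ks}\mathcal{L}_B)=\gamma(\mathcal{L}_A)\otimes_{\mathbb{C}}\gamma(\mathcal{L}_B)$ in \eqref{eq:mono}.
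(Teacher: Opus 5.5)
Your proposal is correct and follows the paper's proof essentially verbatim. Like the paper, you expand $L_A\otimes_{\mathbb{C}}L_B$ into real and imaginary parts, apply $\gamma^{-1}$, and observe that the result is literally the form \eqref{eq:sm:symptp2} of Definition~\ref{def:symptp}. Your extra remarks are harmless refinements the paper leaves implicit: that $(X,Y)$ is uniquely determined by the block form, which makes $\otimes^{\ks}$ well defined, and that the argument also covers rectangular state blocks.
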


\begin{proof}
Let $L_A=X_A+\ii Y_A$, $L_B=X_B+\ii Y_B$.  Then
\begin{align}
  \gamma^{-1}(L_A\otimes_{\mathbb{C}} L_B)
  &= \gamma^{-1}\bigl(
       (X_A+\ii Y_A)\otimes(X_B+\ii Y_B)
     \bigr)\notag\\
  &= \gamma^{-1}\bigl(
       (X_A\otimes X_B-Y_A\otimes Y_B)
       +\ii(X_A\otimes Y_B+Y_A\otimes X_B)
     \bigr)\label{eq:sm:lemma2step}\\
  &= \id_2\otimes(X_A\otimes X_B-Y_A\otimes Y_B)
     +\tau\otimes(X_A\otimes Y_B+Y_A\otimes X_B),\notag
\end{align}
and
\begin{align*}
  \gamma^{-1}(L_A)\otimes^{\ks}\gamma^{-1}(L_B)
  &= (\id_2\otimes X_A+\tau\otimes Y_A)
     \otimes^{\ks}
     (\id_2\otimes X_B+\tau\otimes Y_B)\\
  &= \id_2\otimes(X_A\otimes X_B-Y_A\otimes Y_B)
     +\tau\otimes(X_A\otimes Y_B+Y_A\otimes X_B),
\end{align*}
where the last equality uses Definition~\ref{def:symptp}.  The two
expressions are equal. \qed
\end{proof}

\begin{lemma}[Complexification of tensor product]\label{lem:S3}
  $\gamma(\mathcal{L}_A\otimes^{\ks}\mathcal{L}_B)
  = \gamma(\mathcal{L}_A)\otimes_{\mathbb{C}}\gamma(\mathcal{L}_B)$.
\end{lemma}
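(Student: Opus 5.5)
The plan is to derive Lemma~\ref{lem:S3} from Lemma~\ref{lem:S2} together with the Bijection Theorem~\ref{thm:bijection}, rather than recomputing tensor contractions from scratch.

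Write $L_A=\gamma(\mathcal{L}_A)$ and $L_B=\gamma(\mathcal{L}_B)$. By Statement~1 of Theorem~\ref{thm:bijection} we then have $\mathcal{L}_A=\gamma^{-1}(L_A)$ and $\mathcal{L}_B=\gamma^{-1}(L_B)$. This is exactly the form in which Definition~\ref{def:symptp} applies, so the left-hand side of the lemma is well defined.

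Next we apply Lemma~\ref{lem:S2}:
\[
\mathcal{L}_A\otimes^{\ks}\mathcal{L}_B=\gamma^{-1}(L_A)\otimes^{\ks}\gamma^{-1}(L_B)=\gamma^{-1}(L_A\otimes_{\mathbb{C}}L_B).
\]
Applying $\gamma$ to both sides and using Statement~2 of Theorem~\ref{thm:bijection} on the composite space gives
\[
\gamma(\mathcal{L}_A\otimes^{\ks}\mathcal{L}_B)=L_A\otimes_{\mathbb{C}}L_B=\gamma(\mathcal{L}_A)\otimes_{\mathbb{C}}\gamma(\mathcal{L}_B),
\]
which is the claim.

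The one point needing care is that Theorem~\ref{thm:bijection} must be used at dimension $N_AN_B$, with $J=\tau\otimes\id_{N_AN_B}$ and the contraction taken over the first factor $\id_2$ or $\tau$. This causes no difficulty: the proof of Theorem~\ref{thm:bijection} never used the value of $N$, and Eq.~\eqref{eq:sm:symptp2} already displays $\mathcal{L}_A\otimes^{\ks}\mathcal{L}_B$ in the form $\id_2\otimes X+\tau\otimes Y$ with $X=X_A\otimes X_B-Y_A\otimes Y_B$ and $Y=X_A\otimes Y_B+Y_A\otimes X_B$.

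As a cross-check, I would also carry out the direct computation. Using the key instances $\Tc[\id_2\otimes C]=2C$, $\Tc[\tau\otimes C]=0$ and $(-\sigma_y)\tau=\ii\id_2$, we get $\Tc[\mathcal{L}_{AB}]=2X$ and $\Tc[(-\sigma_y\otimes\id)\mathcal{L}_{AB}]=2\ii Y$. Hence $\gamma(\mathcal{L}_{AB})=X+\ii Y$, which factors as $(X_A+\ii Y_A)\otimes_{\mathbb{C}}(X_B+\ii Y_B)$.
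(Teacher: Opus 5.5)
Your proof is correct, but your main argument differs from the paper's. The paper proves the lemma by direct computation. It writes $\mathcal{L}_A\otimes^{\ks}\mathcal{L}_B$ as $\id_2\otimes(X_A\otimes X_B-Y_A\otimes Y_B)+\tau\otimes(X_A\otimes Y_B+Y_A\otimes X_B)$, applies $\gamma$ to read off the real and imaginary parts (implicitly reusing the contraction identities from the proof of Theorem~\ref{thm:bijection}), and factors the result as $(X_A+\ii Y_A)\otimes(X_B+\ii Y_B)$. That is exactly your closing ``cross-check''. Your main route instead conjugates Lemma~\ref{lem:S2} by the bijection: $\gamma$ and $\gamma^{-1}$ are mutually inverse both at the single-system level and at dimension $N_AN_B$, so $\gamma^{-1}(L_A\otimes_{\mathbb{C}}L_B)=\gamma^{-1}(L_A)\otimes^{\ks}\gamma^{-1}(L_B)$ formally implies its inverse form. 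Your approach is more economical, and it shows that Lemma~\ref{lem:S3} adds nothing beyond Lemma~\ref{lem:S2} plus bijectivity, so only one direction of the intertwining needs checking for the monoidal isomorphism. The paper's computation is instead self-contained and independent of Lemma~\ref{lem:S2}, and it shows directly how $\Tc$ extracts the real and imaginary parts of the composite. Both routes rest on the same calculation, $\gamma(\id_2\otimes X+\tau\otimes Y)=X+\ii Y$. Your remark that this must hold at dimension $N_AN_B$ applies equally to the paper's proof, which leaves it implicit.
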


\begin{proof}
Direct computation using Definition~\ref{def:symptp} and
Eq.~\eqref{eq:sm:gamma}:
\begin{align*}
  \gamma(\mathcal{L}_A\otimes^{\ks}\mathcal{L}_B)
  &= \gamma\bigl(
      \id_2\otimes(X_A\otimes X_B-Y_A\otimes Y_B)
      +\tau\otimes(X_A\otimes Y_B+Y_A\otimes X_B)
     \bigr)\\
  &= (X_A\otimes X_B-Y_A\otimes Y_B)
     +\ii(X_A\otimes Y_B+Y_A\otimes X_B)\\
  &= (X_A+\ii Y_A)\otimes(X_B+\ii Y_B)
   = \gamma(\mathcal{L}_A)\otimes_{\mathbb{C}}\gamma(\mathcal{L}_B). \qed
\end{align*}
\end{proof}

Together with Theorem~\ref{thm:bijection} (bijection) and
Lemma~\ref{lem:mult} (multiplicativity), Lemmas~\ref{lem:S2}
and~\ref{lem:S3} establish Theorem~1 of the main text: $(\hs,\otimes_{\mathbb{C}})$
and $(\ks,\otimes^{\ks})$ are isomorphic monoidal quantum theories.

\section{CHSH inequality}\label{sec:S4}

We verify the CHSH inequality in K\"{a}hler space.
Choose Alice's operators $A_0=\sigma_z$, $A_1=\sigma_x$ and Bob's
operators $B_0=-(\sigma_x+\sigma_z)/\sqrt{2}$,
$B_1=(-\sigma_x+\sigma_z)/\sqrt{2}$ (all real-coefficient).
Because the coefficients are real, the K\"{a}hler maps are block-diagonal:
\begin{equation}
  C_{a,b}^{\ks}
  = \gamma^{-1}(A_a\otimes_{\mathbb{C}} B_b)
  = \begin{pmatrix}A_a\otimes B_b & 0 \\ 0 & A_a\otimes B_b\end{pmatrix}
  = \id_2\otimes(A_a\otimes B_b),
\end{equation}
a real $8\times 8$ matrix ($a,b\in\{0,1\}$).

The Bell state $\ket{\psi^-}_{\hs}=(\ket{10}-\ket{01})/\sqrt{2}$
maps to
\begin{equation}
  \ket{\psi^-}_{\ks}
  = \gamma^{-1}(\ket{\psi^-}_{\hs})
  = \id_2\otimes\ket{\psi^-},
\end{equation}
a $8\times 2$ real matrix (since $\ket{\psi^-}_{\hs}$ is $4\times 1$ over
$\mathbb{R}$, and the overall Kähler state is $8\times 2$; we write
$\id_2\otimes\ket{\psi^-}$ in the $8\times 2$ sense).

The metric expectation values are
$\langle C_{0,0}^{\ks}\rangle
 = \langle C_{1,0}^{\ks}\rangle
 = \langle C_{1,1}^{\ks}\rangle
 = 1/\sqrt{2}$
and $\langle C_{0,1}^{\ks}\rangle = -1/\sqrt{2}$.
Hence the CHSH combination reads
\begin{equation}
  \langle C_{0,0}^{\ks}\rangle
  +\langle C_{1,0}^{\ks}\rangle
  -\langle C_{0,1}^{\ks}\rangle
  +\langle C_{1,1}^{\ks}\rangle
  = 2\sqrt{2},
\end{equation}
achieving the Tsirelson bound~\cite{Tsirelson1987}.

\section{CHSH$_3$ inequality}\label{sec:S5}

\subsection{Setup}

The $\mathrm{CHSH}_3$ inequality is designed for an entanglement-swapping
network with three parties (Alice, Bob, Charlie) and two independent
sources.  Source~1 distributes particles to Alice and Bob; Source~2 to
Bob and Charlie.  Alice uses three measurement settings $x=1,2,3$;
Charlie uses six settings $z=1,\ldots,6$; Bob performs a Bell-state
measurement with four outcomes $b=b_1b_2\in\{00,01,10,11\}$,
corresponding to
$\{\ket{\phi^-},\ket{\psi^-},\ket{\phi^+},\ket{\psi^+}\}$ respectively,
where $\ket{\phi^\pm}=(\ket{00}\pm\ket{11})/\sqrt{2}$ and
$\ket{\psi^\pm}=(\ket{10}\pm\ket{01})/\sqrt{2}$.

\subsection{Operator definitions}

For Bob's outcome $b=(b_1,b_2)$, the functional is
\begin{align}
  \mathscr{T}_b(P)
  &= (-1)^{b_2}(S_{11}^b+S_{12}^b) + (-1)^{b_1}(S_{21}^b-S_{22}^b)
   \notag\\
  &\quad + (-1)^{b_2}(S_{13}^b+S_{14}^b)
     - (-1)^{b_1+b_2}(S_{33}^b-S_{34}^b)
   \notag\\
  &\quad + (-1)^{b_1}(S_{25}^b+S_{26}^b)
     - (-1)^{b_1+b_2}(S_{35}^b-S_{36}^b),
\end{align}
where $S_{xz}^b=\sum_{a,c}ac\,P(a,b,c|x,z)$.
This defines an operator $\hat{\mathscr{T}}_b$ via
$\mathscr{T}_b(P)=\bra{\psi}\hat{\mathscr{T}}_b\ket{\psi}$.
With Charlie's diagonal observables
$D_{ij}^C=(\sigma_i+\sigma_j)/\sqrt{2}$ and
$E_{ij}^C=(\sigma_i-\sigma_j)/\sqrt{2}$ ($i,j\in\{x,y,z\}$):
\begin{align}\label{eq:sm:opchsh3}
  \hat{\mathscr{T}}_b
  &= (-1)^{b_2}Z^A(D_{zx}^C+E_{zx}^C)
   + (-1)^{b_1}X^A(D_{zx}^C-E_{zx}^C)
   \notag\\
  &\quad + (-1)^{b_2}Z^A(D_{zy}^C+E_{zy}^C)
     - (-1)^{b_1+b_2}Y^A(D_{zy}^C-E_{zy}^C)
   \notag\\
  &\quad + (-1)^{b_1}X^A(D_{xy}^C+E_{xy}^C)
     - (-1)^{b_1+b_2}Y^A(D_{xy}^C-E_{xy}^C),
\end{align}
and Alice's operators $Z^A, X^A, Y^A$ corresponds to Pauli matrices $\sigma_z, \ \sigma_x,\ \sigma_y$, respectively.
Bounds: local-realist $\leq 6$; real QT of Ref.~\cite{Renou2021}
$\leq 7.66$; complex QT $= 6\sqrt{2}\approx 8.49$.

\subsection{Key algebraic obstruction in the Renou formalism}

The Renou et al.\ proof uses self-testing of local Pauli operators
$X^C, Y^C, Z^C$ from measurement statistics, that are defined as $X^C=(D_{zx}^C-E_{zx}^C)/\sqrt{2}\ $, $Y^C=(D_{zy}^C-E_{zy}^C)/\sqrt{2}\ $, and $Z^C=(D_{zx}^C+E_{zx}^C)/\sqrt{2}$.  The crucial
anti-commutation $\{X^C,Y^C\}=0$ and the product rule
$X^CY^C=\ii Z^C$ cannot simultaneously hold for real symmetric matrices
(the only matrices allowed in their formalism with $\otimes_{\mathbb{R}}$).

In our framework these relations hold via the Kähler-space replacements:
\begin{equation}\label{eq:sm:ksalg}
  \sigma_x^{\ks}\sigma_y^{\ks}
  = (\id_2\otimes\sigma_x)(\tau\otimes\tau)
  = \tau\otimes(\sigma_x\tau)
  = J\cdot(\id_2\otimes\sigma_z^{\phantom{\ks}})
  = J\sigma_z^{\ks},
\end{equation}
and
\begin{equation}
  \{\sigma_x^{\ks},\sigma_y^{\ks}\}
  = J\sigma_z^{\ks}+(-J\sigma_z^{\ks})
  = 0,
\end{equation}
where $J=\tau\otimes\id$ is the complex structure.  These are real matrix
equations; no complex numbers appear.

\subsection{Explicit computation: $b=00$}

For $b=00$, Eq.~\eqref{eq:sm:opchsh3} reduces to
\begin{align}
  \hat{\mathscr{T}}_{00}
  &= \sqrt{2}\bigl(
    \sigma_z\otimes\sigma_z
    + \sigma_x\otimes\sigma_x
    + \sigma_z\otimes\sigma_z
    - \sigma_y\otimes\sigma_y
    + \sigma_x\otimes\sigma_x
    - \sigma_y\otimes\sigma_y
  \bigr)\notag\\
  &= 2\sqrt{2}\bigl(
    \sigma_z\otimes\sigma_z
    + \sigma_x\otimes\sigma_x
    - \sigma_y\otimes\sigma_y
  \bigr).\label{eq:sm:T00}
\end{align}
In complex quantum mechanics,
$\bra{\phi^+}\hat{\mathscr{T}}_{00}\ket{\phi^+}=6\sqrt{2}$.

In K\"{a}hler space, the operator maps as
\begin{align}
  \hat{\mathscr{T}}_{00}^{\ks}
  &= 2\sqrt{2}\bigl(
    \sigma_z^{\ks}\otimes^{\ks}\sigma_z^{\ks}
    + \sigma_x^{\ks}\otimes^{\ks}\sigma_x^{\ks}
    - \sigma_y^{\ks}\otimes^{\ks}\sigma_y^{\ks}
  \bigr),\label{eq:sm:T00ks}
\end{align}
with the Bell state mapped as
$\ket{\phi^+}_{\ks}
 = \gamma^{-1}(\ket{\phi^+})
 = \id_2\otimes\ket{\phi^+}$.

Since $\sigma_z$, $\sigma_x$, $\sigma_y$ all have real coefficients in
the Hilbert-space sense (or purely imaginary for $\sigma_y$), we can
apply Lemma~\ref{lem:S2} directly.  Each term transforms as
$\gamma^{-1}(\sigma_a\otimes\sigma_a)=\sigma_a^{\ks}\otimes^{\ks}\sigma_a^{\ks}$,
so
$\bra{\phi^+}_{\ks}\hat{\mathscr{T}}_{00}^{\ks}\ket{\phi^+}_{\ks}
 = \gamma^{-1}(\bra{\phi^+}\hat{\mathscr{T}}_{00}\ket{\phi^+})
 = \gamma^{-1}(6\sqrt{2})
 = 6\sqrt{2}\,\id_2$.

The metric expectation value (Eq.~\eqref{eq:sm:metric}) is therefore
\begin{equation}\label{eq:sm:chsh3result}
  g\!\bigl(\ket{\phi^+}_{\ks},\hat{\mathscr{T}}_{00}^{\ks}\ket{\phi^+}_{\ks}\bigr)
  = \tfrac{1}{2}\Tr[6\sqrt{2}\,\id_2]
  = 6\sqrt{2},
\end{equation}
the maximum quantum value, achieved with purely real arithmetic. \qed

Additionally, the $\mathrm{CHSH}_3$ violation can be explicitly visualized for both the standard complex and real K\"{a}hler framework using the provided Mathematica supplement.

\section{Local maps in K\"{a}hler space}\label{sec:S6}

In complex quantum mechanics, a local unitary acting on subsystem $A$ is
$\rho'_{AB}=(U_A\otimes\id_B)\rho_{AB}(U_A^\dagger\otimes\id_B)$.
The same operation in K\"{a}hler space is
\begin{equation}\label{eq:sm:localmap}
  {\rho'}^{\ks}_{AB}
  = (U_A^{\ks}\otimes^{\ks}\id_B^{\ks})\,\rho^{\ks}_{AB}\,
    ({U_A^{\ks}}^\dagger\otimes^{\ks}\id_B^{\ks}),
\end{equation}
where $U_A^{\ks}=\gamma^{-1}(U_A)$, $\id_B^{\ks}=\id_2\otimes\id_B$,
and $\rho^{\ks}_{AB}=\gamma^{-1}(\rho_{AB})$.  By Lemma~\ref{lem:S2},
$\gamma(\rho'^{\ks}_{AB})=\rho'_{AB}$, confirming that local unitary
evolutions are represented locally in $\ks$ with respect to $\otimes^{\ks}$.

More generally, for a CPTP map $\Phi_A$ with Kraus operators $\{M_i\}$
satisfying $\sum_i M_i^\dagger M_i=\id_A$, the K\"{a}hler space action is
\begin{equation}
  {\rho'}^{\ks}_{AB}
  = \sum_i (M_i^{\ks}\otimes^{\ks}\id_B^{\ks})\,\rho^{\ks}_{AB}\,
    ({M_i^{\ks}}^\dagger\otimes^{\ks}\id_B^{\ks}),
\end{equation}
and the completeness relation becomes
$\sum_i {M_i^{\ks}}^\dagger M_i^{\ks}=\id_2\otimes\id_A=\id_A^{\ks}$
in $\ks$.  This confirms that local maps in $\hs$ correspond exactly to
local maps in $\ks$: there is no nonlocal overhead, contrary to the
claim of Ref.~\cite{Feng2025}.  The $\id_2/\tau$ block is part of the
fixed geometric structure (the complex structure $J$) and not a
dynamical ancillary system.

\section{Connection to the balanced tensor product}\label{sec:S7}

From the perspective of algebraic quantum theory, the symplectic product
$\otimes^{\ks}$ is the balanced tensor product over the complex
structure~\cite{Volovich2025a}:
\begin{equation}
  \hs_A\otimes_{(J_A,J_B)}\hs_B
  \coloneqq
  (\hs_A\otimes_{\mathbb{R}}\hs_B)\big/
  \langle J_A x\otimes y - x\otimes J_B y\rangle.
\end{equation}
Quotienting by this relation enforces $(J_A x)\otimes y = x\otimes(J_B y)$,
i.e.\ $\ii$-linearity across subsystems.  In block-matrix language this
quotient is precisely Definition 1.


\begin{thebibliography}{99}

\bibitem{Renou2021}
Renou, M.-O.\ et al.\
Quantum theory based on real numbers can be experimentally falsified.
\textit{Nature} \textbf{600}, 625--629 (2021).

\bibitem{Hoffreumon2025}
Hoffreumon, T.\ \& Woods, M.\,P.\
Quantum theory does not need complex numbers.
\textit{arXiv preprint}, arXiv:2504.02808 (2025).

\bibitem{Dyson2009}
Dyson, F.\,J.\
Birds and frogs.
\textit{Notices Am.\ Math.\ Soc.}\ \textbf{56}, 212--223 (2009).

\bibitem{Stueckelberg1960}
Stueckelberg, E.\,C.\,G.\
Quantum theory in real Hilbert space.
\textit{Helv.\ Phys.\ Acta} \textbf{33}, 727 (1960).

\bibitem{Chen2022}
Chen, M.-C.\ et al.\
Ruling out real-valued standard formalism in quantum theory.
\textit{Phys.\ Rev.\ Lett.}\ \textbf{128}, 040403 (2022).

\bibitem{Volovich2025a}
Volovich, I.\
Real quantum mechanics in a K\"{a}hler space.
\textit{arXiv preprint}, arXiv:2504.16838 (2025).

\bibitem{Volovich2025b}
Aref'eva, I.\ \& Volovich, I.\
Notes on real quantum mechanics in a K\"{a}hler space.
\textit{arXiv preprint}, arXiv:2506.07632 (2025).

\bibitem{Hoffreumon2026}
Hoffreumon, T.\ \& Woods, M.\,P.\
Quantum theory based on real numbers cannot be experimentally falsified.
\textit{arXiv preprint}, arXiv:2603.19208 (2026).

\bibitem{Hita2025}
Barrios Hita, P.\ et al.\
Quantum mechanics based on real numbers: a consistent description.
\textit{Phys.\ Rev.\ Lett.}\ \textbf{136}, 240202 (2026).

\bibitem{Feng2025}
Feng, T., Ren, C.\ \& Vedral, V.\
Locality implies complex numbers in quantum mechanics.
\textit{arXiv preprint}, arXiv:2504.07808 (2025).

\bibitem{Lancaster2025}
Lancaster, J.\,L.\ \& Palladino, N.\,M.\
Testing the necessity of complex numbers in traditional quantum theory
with quantum computers.
\textit{Am.\ J.\ Phys.}\ \textbf{93}, 110--120 (2025).

\bibitem{Tsirelson1987}
Tsirel'son, B.\,S.\
Quantum analogues of the Bell inequalities.
\textit{J.\ Math.\ Sci.}\ \textbf{36}, 557--570 (1987).


\end{thebibliography}
\end{document}